\newtheorem{definition}{Definition}
\newcommand{\cmark}{\ding{51}}%
\newcommand{\xmark}{\ding{55}}%
\def\BibTeX{{\rm B\kern-.05em{\sc i\kern-.025em b}\kern-.08em
    T\kern-.1667em\lower.7ex\hbox{E}\kern-.125emX}}
\newtheorem{theorem}{Theorem}
\def\BibTeX{{\rm B\kern-.05em{\sc i\kern-.025em b}\kern-.08em
    T\kern-.1667em\lower.7ex\hbox{E}\kern-.125emX}}
\begin{document}

\title{Reward Mechanism for Blockchains Using Evolutionary Game Theory\\
}

\author{\IEEEauthorblockN{Shashank Motepalli,
Hans-Arno Jacobsen}
\IEEEauthorblockA{Department of Electrical and Computer Engineering,
University of Toronto\\
shashank.motepalli@mail.utoronto.ca,
jacobsen@eecg.toronto.edu}}

\IEEEoverridecommandlockouts
\IEEEpubid{\begin{minipage}{\textwidth}\ \\[12pt]
 \IEEEauthorrefmark{1}This research has in part been funded by NSERC.\\[4pt]
 978-1-6654-3924-4/21/\$31.00 ©2021 IEEE
\end{minipage}}
\maketitle
\IEEEpubidadjcol

\begin{abstract}
Blockchains have witnessed widespread adoption in the past decade in
various fields. The growing demand makes their scalability and
sustainability challenges more evident than ever. As a result, more
and more blockchains have begun to adopt proof-of-stake (PoS)
consensus protocols to address those challenges. One of the
fundamental characteristics of any blockchain technology is its
crypto-economics and incentives. Lately, each PoS blockchain has
designed a unique reward mechanism, yet, many of them are prone to
free-rider and nothing-at-stake problems. To better understand the ad-hoc design of reward mechanisms, in this paper, we develop a reward
mechanism framework that could apply to many PoS blockchains. We
formulate the block validation game wherein the rewards are
distributed for validating the blocks correctly. Using evolutionary
game theory, we analyze how the participants' behaviour could
potentially evolve with the reward mechanism. Also, penalties are
found to play a central role in maintaining the integrity of
blockchains.
\end{abstract}

\begin{IEEEkeywords}
Evolutionary game theory, blockchain, PoS, BFT, incentive mechanism
\end{IEEEkeywords}
\section{Introduction}
\label{introduction}
Bitcoin began the blockchain revolution by enabling a peer-to-peer
version of electronic cash~\cite{nakamoto2019bitcoin}. Over the past
decade, Bitcoin grew as a digital reserve worth more than a trillion
dollars~\cite{pound_2021}. Ethereum further extended these concepts to
create programmable money using smart
contracts~\cite{wood2014ethereum}. This idea led to numerous
applications~\cite{adams2021uniswap,
  mejeris2021blockchain}. However, the increasing popularity has
exposed the drawbacks of such systems, that is proof-of-work-based
(PoW) systems do not scale~\cite{mentzer2018impact,
  kokoris2018omniledger}. Furthermore, PoW raises concerns about its
high carbon footprint~\cite{stoll2019carbon}. Lately, more and more
blockchains are migrating to proof-of-stake (PoS) protocols for
state-replication while addressing scalability and sustainability
challenges~\cite{kiayias2017ouroboros, kwon2014tendermint,
  gilad2017algorand}. PoS systems employ \textit{validators} for
processing the transactions, similar to the miners in PoW systems. In
the PoS system, the participants often lock their tokens to become
validators.
%
These validators participate in the consensus process to decide on the
next state of the ledger. With each new block added, i.e., with every
state transition, tokens are minted to reward the validators for
processing the transactions. A validator can also increase their stake
by buying the tokens traded in the secondary markets. Since the entire
security in PoS systems relies on stake, the economics of tokens are
more critical than ever.


\begin{table*}[]
\caption{Reward mechanism in prominent PoS blockchains}
\centering
\label{tab:various-bc}
\begin{tabular}{|l|c|c|c|c|c|}

\hline
\textbf{}             & \multicolumn{1}{l|}{\textbf{Reward every validator}} & \multicolumn{1}{l|}{\textbf{Proportional to stake}} & \textbf{Penalize not voting} & \multicolumn{1}{l|}{\textbf{Penalize conflicting transactions}} &\multicolumn{1}{l|}{\textbf{References}} \\ \hline\hline
\textbf{Algorand}     & \cmark                                                    & \cmark                                                        & \xmark                     & \xmark    &   ~\cite{gilad2017algorand,algorand-faq, fooladgar2020incentive}                                                                 \\ \hline
\textbf{Avalance}     & \xmark                                                    & \xmark                                                         & \xmark                     & \xmark        &~\cite{aval-token}                                                                 \\ \hline
\textbf{Cardano}      & \cmark                                                    & \xmark   & \xmark                     & \xmark     &\cite{david2018ouroboros, kiayias2017ouroboros}                      \\ \hline
\textbf{Cosmos}       & \xmark  & \xmark     & \cmark                    & \cmark    &~\cite{cosmos, kwon2014tendermint}                                                                    \\ \hline
\textbf{Ethereum 2.0} & \xmark           & \cmark          & \cmark    & \cmark   &~\cite{eth2stake, eth2pos}                                                                     \\ \hline
\textbf{Polkadot}     & \cmark & \cmark & \xmark    & \cmark      &~\cite{polkadot}              \\ \hline
\end{tabular}
  \vspace{-2em}
\end{table*}


Each PoS blockchain tends to approach its \textit{reward mechanism}
design in a unique way (see Table~\ref{tab:various-bc}). We designedly
establish it as \textit{reward mechanism} and not an incentive
mechanism because we define a reward as both, the sum of
%
%
\textit{incentive} and \textit{penalty}. Some PoS blockchains such as
Polkadot and Cardano reward every registered validator irrespective of
whether they contributed to the consensus
process~\cite{david2018ouroboros, polkadot}. Algorand goes even
further to reward everyone who owns its tokens~\cite{algorand-faq,
  fooladgar2020incentive}. Whereas, Avalanche, Cosmos, and Ethereum
2.0 only reward validators who participate in the consensus
process~\cite{aval-token, cosmos, eth2stake}. In case everyone is
rewarded equally, a validator can become a \textit{free rider}
enjoying the rewards without doing the work. Furthermore, blockchains
differ a lot in terms of the reward they give. Most blockchains
provide rewards proportional to the stake validators' pledge. On the
other hand, Ethereum 2.0 and Polkadot reward equally irrespective of
the validators' stake~\cite{eth2stake, polkadot}. As there is
\textit{nothing at stake} to lose in some designs, validators can act
in ways that might affect the integrity of the ledger. Few blockchains
introduced penalties to address the \textit{nothing at stake}
problem. However, there are stark differences among PoS systems
regarding penalties. Ethereum 2.0 and Cosmos penalize (i.e., slash)
even if the validator is not participating in the consensus
protocol~\cite{cosmos, eth2stake}. Polkadot penalizes if and only if
the validator signs conflicting transactions~\cite{polkadot}. Whereas,
Algorand, Avalance, and Cardano foresee no notions of penalty in their
design.

%
%
The problem this paper addresses is understanding the reward mechanism
in PoS blockchains by proposing a unified framework. This problem
is interesting because there are a plethora of contradicting
approaches adopted by current blockchains. Since the reward mechanism
is crucial to the integrity of PoS systems, it is critical to
understand the implications of reward mechanisms. The central idea we
analyze is how to design rewards (i.e., payoffs) such that rational
behaviour would promote honesty. To abstract, there are three questions
to consider towards reward mechanism design in a PoS system. Firstly,
\textit{should all validators be rewarded equally, even if they don't contribute to consensus?} Secondly, \textit{should penalities be imposed for integrity in blockchains?}  Thirdly, \textit{would honesty emerge as a stable state?}

In this paper, we use game theory to formulate block validation as a
game. Block validation is a necessary action for progressing the
blockchain. The validators reach a consensus by validating and voting
on each proposed block and get rewarded for their goodwill. It is a
multiplayer game. The rewards for validators are not just dependent on
individual strategy but also on what the majority votes. Assuming
validators are rational, i.e., selfish, and want to maximize their
payoff, our goal is to design rewards for the game such that voting on
the honest strategy always remains rational. We begin with a simple
reward matrix, adjust it to address free-riding and nothing-at-stake
problems.
%
Finally, we use evolutionary game theory to analyze long term impact of whether the honest
strategy is a stable state. We ran simulations to confirm the same.

%
%

The contributions of this paper are three-fold. (1) To the best of our
knowledge, we are among the first to formulate the block validation
game that can be applied on any PoS blockchain. (2) We are also among
the first to analyze how validators change strategies in blockchains
using evolutionary game theory. (3) We establish the need for
penalties for the integrity of PoS blockchains.



This paper is organized as follows. In Section~\ref{background}, we
discuss necessary background about PoS and evolutionary game
theory. We review related work in Section~\ref{related-work}. In
Section~\ref{block-validation-game}, we formalize the block validation
game addressing the free-rider and nothing-at-stake problems. We
derive the evolutionarily stable strategies (ESS) in
Section~\ref{evolutionariy-stable-strategy} and experimentally confirm
ESS in Section~\ref{simulations}. Finally, we conclude with future
directions in Section~\ref{conclusion}.

\section{Background}
\label{background}

\subsection{Proof of Stake}
%
%
Proof of Stake is a state-replication mechanism designed for public
blockchains wherein parties that might not trust the other need to
reach consensus. The core of PoS relies on the simple idea that one
would not devalue the assets one owns. Hence, the stakeholders accept
the responsibility of maintaining the security of the PoS
blockchain. In a PoS system, the stakeholders can stake their tokens
to become validators on the network. The validators in PoS assume the
role of miners in PoW systems by verifying the transactions and
creating new blocks.

PoS systems are still nascent and evolving. However, there are two
broad categories of PoS concerning participation from validators:
\begin{itemize}
  \item Probabilistic PoS: This is similar to Bitcoin's Nakamoto
    consensus.
    %
    Instead of deciding the block proposer proportional to
    computational power in Nakamoto consensus, a block proposer is
    selected based on its stake in the PoS system. The higher its
    stake, the higher is its probability of being selected in the
    random block comit process. After being selected, the block
    proposer proposes the next block to be added on top of the
    previous block.  Cardano~\cite{david2018ouroboros} adopts this
    approach.
  \item BFT-based PoS: This is a weighted Byzantine fault-tolerant
    (BFT) consensus protocol wherein weights are proportional to the
    stake validators own. Each block requires a byzantine agreement
    for it to be appended on the blockchain. When the majority of
    validators vote, we reach finality on the given block.
    %
    In BFT-based PoS systems, more validators need to participate in
    the consensus process. Examples of this implementation include
    Algorand~\cite{gilad2017algorand},
    Cosmos~\cite{kwon2014tendermint}.
\end{itemize}

There are many variants of PoS systems, and the above is not an
exhaustive list. For instance, Ethereum 2.0 is adopting a hybrid of
both probabilistic PoS and BFT-based PoS~\cite{eth2pos}. In this
paper, we concentrate on BFT-based PoS protocols.

Let us walk through a lifecycle of a generic transaction on a
BFT-based PoS blockchain. A client generates a transaction and sends
it to a validator who gossips the transaction to other validators.
The validators verify and store valid transactions in their memory
pool. When a validator is selected to become block proposer, it goes
through its memory pool of valid transactions and accumulates a few of
them to generate a block to broadcast to all other validators. The
validators simulate the transactions in a block, verify their
correctness, and sign their approval or disapproval for that block~\cite{chacko2021my}.
%
%
The block proposer collects signatures from all the validators.
%
If a quorum confirms the validity of a block, we reach finality on the
block. Since most BFT protocols tolerate up to one-third of its nodes
to fail, the quorum size is set to two-thirds~\cite{
  kwon2014tendermint, zhang2021prosecutor}.

\subsection{Evolutionary Game Theory}
The origin of evolutionary game theory dates back to 1973 when John
Maynard Smith and George R. Price formalized it to study the evolving
populations for lifeforms in biology in their work, "The logic of
animal conflict"~\cite{smith1973logic}. Though it started as a concept
in evolutionary biology to explain the Darwinian evolution of species
proving the survival of the fittest~\cite{vincent2005evolutionary},
the theories of evolutionary game theory were adopted by
economists~\cite{friedman1998economic, witt2016specific,
  mailath1998people}, psychologists~\cite{buss2015evolutionary,
  laland2011sense, tooby2005conceptual}, among others. In classical
game theory, the success of a strategy depends on the strategy
itself. In contrast, in evolutionary game theory, the game is played
multiple times among numerous players. The success of an individual
strategy depends not only on the strategy itself but also on the
frequency distribution of alternative strategies and their successes.

The fitness of a strategy in a population is the payoff received for
choosing a strategy, given the population state, i.e., the frequency
of each strategy~\cite{easley2010networks}. In terms of the usual
conventions of a game, the higher the payoff, the higher the
fitness. Analogous to the Nash equilibrium in classical game theory,
there is the \textit{evolutionarily stable state} (ESS) in
evolutionary game theory. A strategy is in ESS if it is unaffected by
a small fraction of invading mutants that choose a different
strategy~\cite{smith1976evolution, cowden2012game}. In other words, if
for a sufficiently small mutant population, the fitness of the
incumbent strategy is still higher than that of any of the strategies
by mutants, the players stick to the incumbent strategy, making it an
ESS. We use ESS to prove that our reward design for the blockchain
network is stable.

\section{Related Work}
\label{related-work}
This work is closely related to two categories of related work. Firstly, the economics of PoS blockchains. Secondly, the applications of game theory and evolutionary game theory in blockchains. We reviewed the economics of existing PoS blockchains in Table \ref{tab:various-bc}.
%
%

%
%
%

%
%
A comprehensive survey summarizing numerous applications of game
theory to blockchains is provided by Liu et
al.~\cite{liu2019survey}. There is large interest in the
game-theoretic analysis of attacks such as selfish mining
attacks~\cite{zhang2020analysing}, block withholding
attacks~\cite{kroll2013economics}, among others. Few recent works
formalize evolutionary games for selection of mining pools and shards
in PoW-based blockchains~\cite{liu2018evolutionary,
  ni2019evolutionary, kim2019mining}. However, these works focus on
PoW consensus protocols and have limited applicability to PoS systems.

Our work also relates to reward design using game theory, studied for
example in~\cite{iyer2018crypto, chang2020incentive,
  chiu2019incentive}. However, their focus remained on PoW single-shot
games and not evolutionary ones. A data-sharing incentive model for
smart contracts based on evolutionary game theory is proposed
in~\cite{xuan2020incentive}. Nevertheless, the reward design at the
protocol layer is not analyzed. We are among the first to study
reward mechanism in blockchains using evolutionary game theory.

\section{The Block Validation Game}
\label{block-validation-game}
We employ game theory to model the strategic interactions among the
validators in a blockchain network. A normal form game $G$($N$,
$M_{i}$, $U_{i}$) has three key components: i) the players $N$, ii)
the strategies, denoted by $M$, iii) the payoffs for all the potential
strategies, denoted by $U$. We use payoff and reward interchangeably
in this work. In our game, the validators $v$, in the validator set,
verifying the blocks are the players, stated below as $N$.
\begin{equation}
N = \{v_i \mid v_i \in Validator Set\} 
\end{equation}

%
These validators contribute to the consensus protocol by approving or
disapproving the blocks proposed to be appended to the ledger. Each
block $B_i$ contains a set of transactions $tx_k$ as follows.
\begin{equation}
B_i=\{tx_{0}, tx_{1}, tx_{2}, ... tx_{n}\}
\end{equation}

Every validator owns a local copy of the blockchain state generated
from listening to blocks on the network.
%
The validators simulate the transactions on their local blockchain
state to verify their correctness. The correctness of a transaction
depends on several factors including, but not limited to, the
authentication of the sender's signature, the balance on the sender's
account, verifying the sequence number of the accounts of the sender,
and the logic of the smart contract. We define a valid block as
follows.

\begin{definition}
  Valid Block. The block $B_i$ is valid if and only if every
  transaction in the block is correct.
\end{definition}
If $correct(tx)$ provides the correctness of a transaction, a valid
block is represented as follows.
\begin{equation}
{valid(B_i) = correct(tx_0) ...\land correct(tx_n), \forall tx_k\,\exists B_i} 
\end{equation}

The validators have two potential strategies. They can either act
honestly or maliciously. An honest validator approves valid blocks and
disapproves invalid blocks. Any actions that deviate from honest
behaviour are considered malicious, including disapproving a valid
block or approving a block that is not valid. If a validator does not
participate in the consensus process, it is considered malicious
behaviour. Malicious behaviour accounts for all types of failures in the
system, such as crash failures, network failures, and Byzantine
failures. The validators' strategies, where $M$, the set of
strategies, is defined as follows. Here, $h$ and $m$ represent honest
and malicious strategies, respectively.
\begin{equation}
M= \{h, m\}
\end{equation}

%
%
By signing with their private key, validators respond to the proposer,
stating whether they approve or reject the proposed block. A validator
can at the most influence but cannot control the decisions of
others. The validators could only pursue pure strategies, i.e., a validator can either
act honestly or maliciously on a single block but not a combination of both. 
%
An individual validator has little or no influence on the final
decision, as the final decision depends on the population state. The
population state $X$ for block $B_i$ is the distribution of the
strategies of the validator set in that block; it is given below.
\begin{equation}
   X_{B_i} = \begin{bmatrix}
    x_{h}\\
    x_{m} \\
\end{bmatrix} 
\end{equation}
where $x_{h}$ $\geq$ 0, $x_{m}$ $\geq$ 0 and $x_{h}$ + $x_{m}$ = 1. In
the following, we look into the reward matrix for this game.

\subsection{Universal Reward: The Basic Reward Matrix}

The rewards play a crucial role in directing validators'
strategies. We assume all the validators are rational, i.e., they are
selfish and play to maximize their rewards. Our goal is to design
rewards such that the best response for a rational validator would be
acting honestly. A validator's reward is not just dependent on its
strategy. The rewards also depends on how other validators act, i.e.,
the population state. Each validator decides individually, knowing
very little about the others' strategies. BFT consensus protocols
require a quorum $\mathbb{Q}$, at least two-thirds of the validator
set $N$, to reach a consensus. The quorum is a subset of the validator
set as follows.
\begin{equation}
\mathbb{Q}_{B_m} \subset N, \; 
|\mathbb{Q}_{B_m}| \geq \frac{2}{3} \times |N|
\end{equation}
We assume that every round terminates with a decision on a block,
either by reaching an honest quorum or a malicious quorum. In other
words, we assume to reach a two-thirds quorum for each block. If $x
\geq 0.67$, then $X_{B_k}$ and $X'_{B_k}$ are the population states,
for honest and malicious quora, respectively.
\begin{equation}
   X_{B_k} = \begin{bmatrix}
    x\\
    1-x \\
\end{bmatrix},   X_{B_k}' = \begin{bmatrix}
    1-x\\
    x \\
\end{bmatrix} x \geq 0.67
\end{equation}
In case we do not achieve a consensus, the system proposes a nil block
after a timeout. A nil block is an empty block, leading to no approved
transactions and no rewards to anyone in that consensus round. Though
BFT protocols assume an honest majority, we do not want to rule out
the possibility of a malicious majority, given the high stakes
involved.  Malicious behaviour could be subtle to be labelled by the
system.For instance, if a few validators form a cartel and choose to censor
transactions by certain entities; this behaviour is not malicious
according to BFT consensus. Even in the cases of hard forks to adopt
new rules is subjective and does not constitute malicious
behaviour. In this paper, we limit our discussion about what
constitutes a Byzantine transaction in the specified network and
assume an altruistic quorum as a basis for designing rewards in the
system.  In other words, we reward anyone who adheres with a
two-thirds majority, and we consider anyone who digresses from it as
Byzantine. Every block provides rewards to all the validators for
contributing to the consensus.

%
The security and integrity of the blockchain network depends on the liveness of
validators and the availability of data. The validators incur an
expense \textit{e} in terms of network, computation, and storage
resources for verifying blocks. The blockchain system provides
incentives to reward honest behaviour and meet desired security. The
incentive $i$ is an aggregate sum of transaction fees on each
transaction $tx$ of that block and the block rewards $R$. The users
pay transaction fees for processing their transactions. The blockchain
system provides the block rewards for generating new blocks. The
effective reward $r$ for reaching a consensus on the next block $B_m$
for a validator $v_i$ is the difference between incentive $i$ and cost
$c$.  Here, $N$ is the size of the validator set. Since PoS and BFT
protocols are not computationally expensive operations, we assume the
costs \textit{e} to be lower, when compared to PoW mining. The
incentive $i$ is given as follows.
\begin{equation}
i_{B_m} = \bigg(\sum_{0}^{N}{fees(tx_i)} + R\bigg)_{B_m}
\end{equation}
The reward is the difference between incentive and expense for running
the node, it is stated below.
\begin{equation}
r_{B_m}({v_i})= \bigg(\frac{i}{N}\bigg)_{B_m}-  \textit{e}
\end{equation}
\begin{theorem} 
The reward has to be positive for the network to be secured.
\end{theorem}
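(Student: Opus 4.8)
The plan is to prove the contrapositive by means of an individual-rationality (participation) argument. The preceding discussion establishes that the security and integrity of the network rest on the liveness of validators and the availability of their data; hence a secured network requires that rational validators actually choose to stake, run a node, and verify blocks. I would therefore show that whenever the effective reward $r_{B_m}(v_i)$ is non-positive, a payoff-maximizing validator has no incentive to participate, so liveness and availability---and with them, security---cannot be sustained.

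First I would make the validator's decision explicit as a choice between two options: participating in consensus, which yields the net payoff $r_{B_m}(v_i) = (i/N)_{B_m} - e$, versus staying out of the system (not staking, not running a node), which incurs neither the expense $e$ nor any reward and thus carries a normalized outside-option payoff of $0$. Second, invoking the stated rationality assumption that validators are selfish payoff-maximizers, a validator participates only if the participation payoff weakly dominates the outside option, i.e. only if $r_{B_m}(v_i) \geq 0$; and since at $r_{B_m}(v_i) = 0$ the validator is indifferent and cannot be relied upon to expend real resources, guaranteed participation forces the strict inequality $r_{B_m}(v_i) > 0$. Third, I would contrapose: if $r_{B_m}(v_i) \leq 0$, rational validators abstain, the active validator set erodes, liveness and data availability fail, and the network is no longer secured. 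Equivalently, a secured network implies $r_{B_m}(v_i) > 0$, which is the claim.

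The main obstacle is isolating the correct baseline for the non-participation payoff. Under the universal reward matrix of this subsection every validator is rewarded regardless of effort, so one must be careful not to conflate \emph{abstaining from the network entirely} (the outside option used above) with \emph{joining but free-riding} (collecting $i/N$ while skipping verification to save $e$). The latter is in fact the more profitable deviation and is precisely the free-rider pathology the paper treats separately in its next refinement of the reward matrix; for the present, weaker claim it suffices to compare participation against the outside option of $0$, which cleanly yields $r_{B_m}(v_i) > 0$ without entangling the free-rider issue. A secondary subtlety is the strict-versus-weak inequality at $r_{B_m}(v_i) = 0$, which I would resolve by appealing to the requirement of \emph{reliable} participation rather than mere indifference.
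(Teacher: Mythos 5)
Your proposal is correct and follows essentially the same route as the paper: a participation (individual-rationality) argument showing that rational, payoff-maximizing validators abstain when the reward is non-positive, which undermines liveness and hence security. Your treatment is somewhat more careful than the paper's---you cleanly separate the outside option from the free-riding deviation and address the boundary case $r=0$ (where the paper settles for the weak inequality $r \geq 0$)---but the underlying idea is identical.
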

\begin{proof}
If the expense incurred is negative, it means that running the
validator incurs an expense without any incentive. Since validators
are rational, they would choose to maximize their reward by not
participating in the consensus process. If the majority of honest
validators act rational, the security of the system is
compromised. For the system to be secured, the incentive should be
greater than the expense. In other words, rewards should be positive,
$r_{B_m} \geq 0$.
\end{proof}

Some validators might exhibit malicious behaviour such as
double-spending to gain benefits, often much higher than the
rewards. Let the benefit gained by the malicious cartel of validators
be $B$. It is fair to assume that some malicious validators might be
more beneficial than others. Let the reward for Byzantine behaviour be
represented by $b$, where $0\leq b \leq B$.

Each validator selects their strategy for every block. The reward for
the validator is dependent on the strategy chosen by the quorum (see
Table~\ref{tab:payoff}). The reward matrix $U$ is stated in
Equation~\ref{U1}. If the validator acts honest in an honest quorum,
it earns an effective reward of $r$. If the validator acts malicious
in an honest quorum, it reaps a reward of $r+\textit{e'}$, where $0
\leq \textit{e'} \leq \textit{e}$ could be the cost saved by not
running the validator. In the case of a malicious quorum, an honest
and malicious validator would be making $r$ and $r+b$,
respectively. In case of malicious behaviour in the malicious quorum, the reward could be $r+b+\textit{e'}$. Because we assume the Byzantine reward to be generally higher than the cost, $b \geq \textit{e'}$, we considered it as $r+b$ in the reward matrix.

\begin{table}[]
\centering
\caption{Universal Reward Case: The Reward Matrix}
\label{tab:payoff}
\begin{tabular}{c|l|l|l|}
\cline{2-4}
\multicolumn{1}{l|}{}                            & \multicolumn{3}{c|}{Quorum}    \\ \hline
\multicolumn{1}{|c|}{\multirow{3}{*}{Validator}} &           & honest & malicious \\ \cline{2-4} 
\multicolumn{1}{|c|}{}                           & honest    & $r$      & $r$         \\ \cline{2-4} 
\multicolumn{1}{|c|}{}                           & malicious & $r+$\textit{e'}      & $r + b$     \\ \hline
\end{tabular}
  \vspace{-2em}
\end{table}

\begin{equation}
\label{U1}
  U = \begin{bmatrix}
    $r$       & $r$ \\
    $r+ \textit{e'}$       & $r+b$\\
\end{bmatrix}
\end{equation}

\textbf{Analysis.} A validator receives a maximum reward when it leads a
malicious cartel. Acting
maliciously always has better incentives than acting honestly, so a rational
validator would choose  the no-regret strategy to act maliciously. This could lead to a
malicious quorum, hence, a security concern. A rational validator
chooses the sure-win case of maximizing their reward by \textit{e} by
not participating in consensus, as they do not incur computational
expenses.

\subsection{Reward For Work: Addressing Free-Riding}
As seen in the previous section, malicious behaviour is an optimal
strategy for a rational validator. This behaviour results in cartels or
worse, never leads to reaching a quorum. The validators would be
rewarded more ($r+$\textit{e}) even without validating as long as they
are in the validator set. This behaviour is a classical case of the
free-rider problem wherein the validators earn "something for
nothing"~\cite{li2017novel}. If more honest validators act rationally,
the security of the blockchain network is compromised. To overcome
this challenge, only the validators who participate in consensus
should earn incentives. In other words, the validators who do not
participate in the consensus process should not earn incentives. The
reward function is updated to incentivize only those who are in the
quorum as follows.
\begin{equation}
r_{B_m}({v_i})=\begin{cases}
          (\frac{i}{N`})_{B_m} - \textit{e} \quad &\text{if} \, v_i \in \mathbb{Q} \\
          0 \quad &\text{if} \, v_i \notin \mathbb{Q} \\
     \end{cases}
\end{equation}
Here, $N`$ is the size of the quorum. It ranges from two-thirds to
cardinality of the validator set. Let $e`$ be the cost for malicious
behaviour, which is equal to $e$ if it participates and 0, otherwise.

The reward matrix is updated to reward only those who worked (see
Table~\ref{tab:payoff2}). With an honest quorum, a validator choosing
an honest strategy would earn an effective reward $r$, whereas a
validator choosing a malicious strategy would result in a maximum of
zero payoffs if it saves on computational resources. However, if you
act honestly in the malicious quorum, you incur an expense \textit{e}
without any reward. A malicious validator that forms the malicious
quorum earns the reward of $r+b$. The updated reward matrix is given
below.
\begin{equation}
  U = \begin{bmatrix}
    $r$       & -\textit{e} \\
    -\textit{e}`       & $r+b$\\
\end{bmatrix}
\end{equation}

\begin{table}[]
\centering
\caption{Reward For Work Case: The Reward Matrix addressing Free Rider Problem}
\label{tab:payoff2}
\begin{tabular}{c|l|l|l|}
\cline{2-4}
\multicolumn{1}{l|}{}                            & \multicolumn{3}{c|}{Quorum}    \\ \hline
\multicolumn{1}{|c|}{\multirow{3}{*}{Validator}} &           & honest & malicious \\ \cline{2-4} 
\multicolumn{1}{|c|}{}                           & honest    & $r$      & -\textit{e}        \\ \cline{2-4} 
\multicolumn{1}{|c|}{}                           & malicious & -\textit{e}`      & $r + b$     \\ \hline
\end{tabular}
  \vspace{-2em}
\end{table}

\textbf{Analysis.} The updated reward matrix ensures that a rational
validator would participate in the consensus process. They continue to
act honestly, assuming an honest quorum. However, if the malicious
majority takes over the quorum, an honest validator would not earn
incentives while incurring operational expenses. The best response for
a rational validator would be signing all the blocks irrespective of
whether they are valid or not.

\subsection{Penalty Case: Addressing Nothing at Stake}
Rational validators being selfish agents to maximize their reward,
approve all the blocks irrespective of whether the blocks are valid or
not. This behaviour guarantees to reward them for every block that gets
appended onto the blockchain. Though this is not a serious threat if
only a few validators do it. However, if malicious players build a
quorum, this would be a security threat to the system. It affects the
social welfare of the system. It could easily lead to a situation
wherein we might have conflicting forks of the ledger, leading to
multiple ledger states.

Consider a scenario where a malicious validator double spends to
create multiple versions of the truth. If the malicious validator is
selected as a block proposer, as shown in Fig.~\ref{nothing}, it
could propose two different blocks from the same parent block. The
malicious validator could double-spend by broadcasting blocks that
have conflicting transactions. As discussed, the rational validators
would approve both the blocks to be in the quorum to increase their
reward. If we reach a quorum of rational validators, we would have
both blocks reaching quorum. Both these chains could grow indefinitely
in parallel. The fork resolution strategies such as the longest chain
rule~\cite{nakamoto2019bitcoin} or GHOST
protocol~\cite{sompolinsky2015secure} cannot resolve which fork is the valid fork, leading to multiple sources of truth. 
Since the equilibrium for a validator is to act rationally, this is a
security threat. We need to re-design the reward mechanism to avoid
this practice.

\begin{figure}[htbp]
\centerline{\resizebox{\width}{4.75cm}{\includegraphics[scale=0.4]{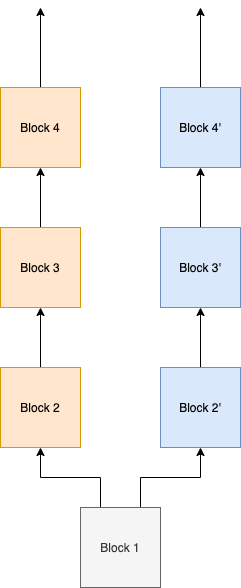}}}
\caption{Forked chains, where attacker double spends after Block~1}
\label{nothing}
\vspace{-1em}
\end{figure}

Our goal is to ensure that rational behaviour is acting honestly. We
need to punish the validators that deviate from the honest
strategy. One option could be not paying the future incentives for
validators that diverge from acting honestly for the next $n$ blocks,
where $n$ is a fixed number. It could also be jailing the validator from
the validator set. Another option is imposing a penalty for diverging
from the honest quorum.

We borrow from Behavioural Economics to improve the integrity of the
system. Kahneman and Tversky concluded that "losses loom larger than
gains" while explaining the concepts of Loss aversion in Prospect
Theory~\cite{tversky1979prospect},~\cite{tversky1991loss}. They
conducted numerous experiments to prove that people when presented
with alternatives, go for choices that either lead to sure wins or
avoid losses. Theory of Loss Aversion concluded that the pain of
losing has a much stronger psychological impact than the pleasure of
gaining the same amount. Few studies have further proved that
participants are also willing to behave dishonestly to avoid a loss
than to make a gain~\cite{pfattheicher2016misperceiving}. The loss
aversion experiments explain why penalties are more effective than
incentives in motivating people to behave in a certain
way~\cite{gachter2009experimental}. We use the same principles to
motivate validators to participate honestly instead of being malicious
validators in the network.

To address nothing at stake, we penalize validators that deviate from
the honest quorum. Let $p$ be the penalty for deviating from the
quorum, where $p>0$. The stake lost in penalty is designed to be much
greater than the incentive received for being in the quorum. The loss
to the stake also reduces the chances of becoming a validator in the
future. The updated reward function for a validator $v_i$ is given
below.
\begin{equation}
r_{B_m}({v_i})=\begin{cases}
          (\frac{i}{N`})_{B_m} - \textit{e} \quad &\text{if} \, v_i \in \mathbb{Q} \\
          - $p$ \quad &\text{if} \, v_i \notin \mathbb{Q} \\
     \end{cases}
\end{equation}

The updated matrix is given below (see
Table~\ref{tab:payoff3}). Acting honestly in an honest quorum gives
reward $r$ while acting maliciously in a honest quorum leads to a
penalty of $\textit{e}` + p$, since $p >> \textit{e}$, we ignore
\textit{e}. In the rare event of malicious majority, acting honestly
would result in penalty $p$. If a validator is acting maliciously in a
malicious majority, it earns a reward $r+b$ depending on its role in
the malicious cartel.

\begin{equation}
  U = \begin{bmatrix}
    $r$       & -p \\
    -p       & $r+b$\\
\end{bmatrix}
\end{equation}

\begin{table}[]
\centering
\caption{Penalty Case: The Reward Matrix addressing Nothing at Stake}
\label{tab:payoff3}
\begin{tabular}{c|l|l|l|}
\cline{2-4}
\multicolumn{1}{l|}{}                            & \multicolumn{3}{c|}{Quorum}    \\ \hline
\multicolumn{1}{|c|}{\multirow{3}{*}{Validator}} &           & honest & malicious \\ \cline{2-4} 
\multicolumn{1}{|c|}{}                           & honest    & $r$      & -p        \\ \cline{2-4} 
\multicolumn{1}{|c|}{}                           & malicious & - p      & $r + b$     \\ \hline
\end{tabular}
  \vspace{-2em}
\end{table}

\textbf{Analysis.} Given the reward matrix, the validators are better
off by acting honestly in an honest quorum than acting maliciously in
a malicious quorum.
%
%
However, under the core assumption of any blockchain network of the
majortity of the network being honest, the best response for a
rational validator is to act honestly.

\section{Evolutionarily Stable Strategy}
\label{evolutionariy-stable-strategy}
In the previous section, we designed the payoffs using classical game
theory that studies one-shot games where the players make rational
choices evaluating probable outcomes. Their outcomes were not just
dependent on their strategy but also on the population state of the
network. Under the assumption of an honest quorum in the penalty case,
we proved that a rational validator's best response would be to act
honestly under the no-regret strategy. Unlike classical one-shot
games
%
%
, the same set of validators play the game multiple times, once for
each block consensus round, heading towards a potential shift in their
strategy in the following rounds. Malicious validators could form
cartels to persuade the rational validators, who are acting honestly,
to alter their strategy over the next few blocks. To study how the
population state is evolving as the blockchain progresses and confirm
if acting honestly is a stable state, we apply evolutionary game
theory to our setting. We study evolutionarily stable strategies
(ESS), a strategy that cannot be invaded by other strategies.

%
%
We can determine ESS by a simple experiment. Assume all the validators
choose a particular strategy. In other words, the whole population of
validators is either honest or malicious. If a small number of mutants
deviate from the incumbent strategy, we analyze whether these minority
mutants have a better or worse payoff than that of the incumbent
strategy. If they do have a better payoff, the incumbent validators
would eventually shift to a mutant strategy. If the mutant strategy
performs worse, no mutants would invade the incumbent strategy making
the incumbent strategy is an ESS~\cite{smith1982evolution}.
%

\subsubsection{Everyone is honest}
Consider the case where all validators are honest ($h$), then the
incumbent strategy is given as follows:
\begin{equation}
   X_{B_h} = \begin{bmatrix}
    1\\
    0 \\
\end{bmatrix} 
\end{equation}

Let us now examine whether validators being honest is an ESS. When all
the validators are honest, if a small percentage of a mutant
population $\epsilon$ invade the network, the incumbent validators
continue remaining honest, we consider the honest strategy as ESS. We
can tolerate up to one-third of the total validators as being mutants,
acting maliciously. The population state with mutants is $X'_{B_h}$.
\begin{equation}
   X'_{B_h} = \begin{bmatrix}
    1- \epsilon\\
    \epsilon \\
\end{bmatrix} \epsilon \leq 0.33 
\end{equation}

The fitness $\mathcal{F}$ is the payoff for choosing a particular
strategy, given the population state $X'_{B_h}$. The fitness of the
incumbent strategy, \textit{honest}, is $\mathcal{F}(h)$, and the fitness of
the mutant strategy, \textit{malicious}, is $\mathcal{F}(m)$. We analyze the fitness for the three cases:
%
%
\begin{itemize}
\item Universal reward case: $\mathcal{F}(h) = r$ and $\mathcal{F}(m)
  = r$, $\mathcal{F}(h) = \mathcal{F}(m)$ and $\mathcal{F}(m) > 0$.
\item Reward for work case: $\mathcal{F}(h) = r$ and $\mathcal{F}(m) =
  -e$, $\mathcal{F}(h) > \mathcal{F}(m)$ and $\mathcal{F}(m) = 0$
\item Penalty case : $\mathcal{F}(h) = r$ and $\mathcal{F}(m) = -p$,
  $\mathcal{F}(h) > \mathcal{F}(m)$ and $\mathcal{F}(m) < 0$
\end{itemize}

\begin{figure*}
    \begin{subfigure}[t]{0.3\textwidth}
      \includegraphics[width=\linewidth]{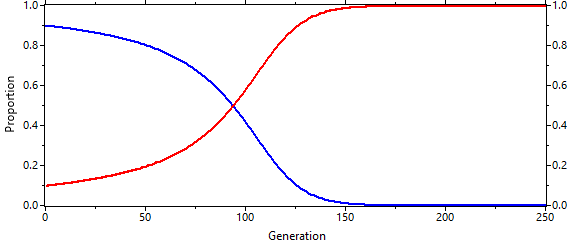}
      \caption{Universal reward case, the malicious minority takes over}\label{fig:case1-9010}
    \end{subfigure}
    \hfill
    \begin{subfigure}[t]{0.3\textwidth}
        \includegraphics[width=\linewidth]{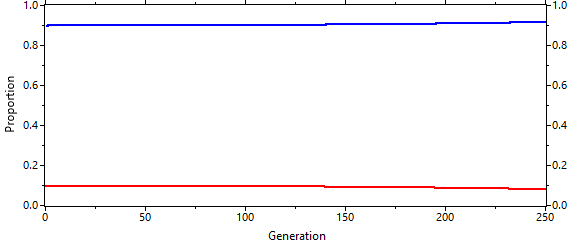}
        \caption{Without penalties, honest strategy can tolerate 10\% malicious}\label{fig:case2-9010}
    \end{subfigure}
        \hfill
    \begin{subfigure}[t]{0.3\textwidth}
        \includegraphics[width=\linewidth]{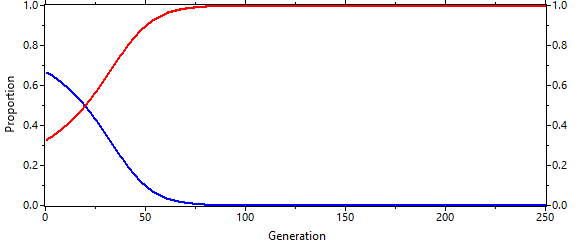}
        \caption{Without penalties, the malicious minority takes over, starting with one-third}\label{fig:case2-simulation}
    \end{subfigure}
    
     \begin{subfigure}[t]{0.3\textwidth}
        \includegraphics[width=\linewidth]{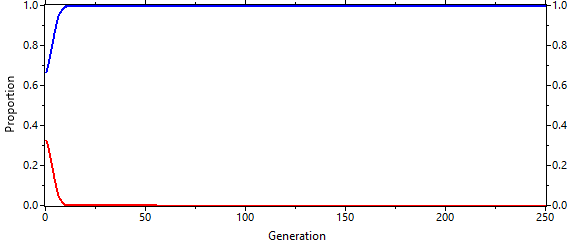}
        \caption{If penalty is 50\% of their total stake}\label{fig:c3500p}
    \end{subfigure}
    \hfill
    \begin{subfigure}[t]{0.3\textwidth}
        \includegraphics[width=\linewidth]{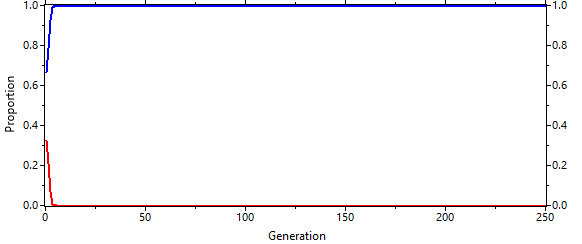}
        \caption{If penalty is 100\% of their total stake}\label{fig:c31000p}
    \end{subfigure}
        \hfill
    \begin{subfigure}[t]{0.3\textwidth}
        \includegraphics[width=\linewidth]{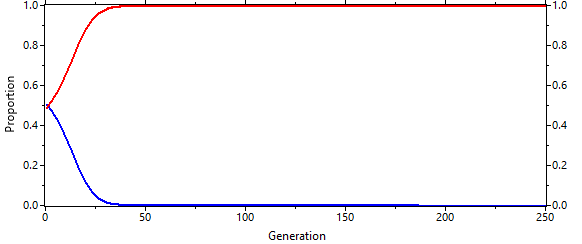}
        \caption{51\% honest and 49\% malicious would lead to malicous ESS}\label{fig:c3_5149}
    \end{subfigure}

    \caption{Experiments to study evolutionarily stable states}
    \vspace{-1em}
\end{figure*}

If the fitness of the incumbent's strategy is greater than that of the
mutant's strategy, the incumbents won't alter their strategy. This
makes the incumbent's strategy an ESS. In the universal reward case,
we have a mixed ESS since both the strategies have equal fitness; this
situation could lead to a potential shift over time. In the other
cases, being honest is ESS. Though the incumbent's honest strategy is
ESS in both cases. Here, the penalty case has a strong ESS, because
according to the theory of loss aversion, penalizing is more powerful
in influencing the decision than not gaining incentives.

\subsubsection{Everyone is malicious}

We consider the case where all the validators choose being malicious
as their incumbent strategy. Similar to the previous situation of an
honest population, we can prove that everyone being malicious would
remain malicious and the malicious strategy is an ESS.

%
%
In the universal reward case, we have no pure ESS. Both the strategies
of being honest and malicious can be invaded by others. For the rest
of the cases, we have two ESS, either honest or malicious populations,
subject to the relative values of the incentives, benefits and
penalties.

\begin{theorem} 
The security of a PoS blockchain system depends on the population
state during the genesis.
\end{theorem}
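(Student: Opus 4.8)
The plan is to leverage the bistability already established in the preceding ESS analysis. For the penalty case I would first recall that both the all-honest state $X_{B_h} = (1,0)^\top$ and the all-malicious state $(0,1)^\top$ are evolutionarily stable: in each case a small mutant fraction $\epsilon$ earns strictly lower fitness than the incumbent, so neither endpoint can be invaded. Since a two-strategy evolutionary game possessing two distinct pure ESS cannot leave either endpoint, the replicator dynamics must carry every interior initial condition toward one of the two stable endpoints rather than to a common limit, and which endpoint is reached is exactly what I must characterize in terms of the genesis state.

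Next I would set up the replicator equation for the honest fraction $x$ under the penalty payoff matrix $U$, expressing the fitness gap $\mathcal{F}(h) - \mathcal{F}(m)$ as a function of $x$ through the quorum rule: an honest quorum forms when $x \geq 2/3$, a malicious quorum when $x \leq 1/3$, and neither forms (nil block, zero payoff to all) in the intermediate band. In the honest-quorum regime the gap equals $r - (-p) = r + p > 0$, so $x$ increases toward $1$; in the malicious-quorum regime the gap equals $-p - (r+b) < 0$, so $x$ decreases toward $0$. This yields two basins of attraction, and I would argue that a genesis population with a sufficient honest supermajority is pulled into the honest ESS while one with a sufficient malicious supermajority is pulled into the malicious ESS, with the separating surface lying inside the band $1/3 < x < 2/3$.

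I would then conclude that, because the long-run stable state is selected entirely by which basin the genesis population occupies, and because the malicious ESS compromises integrity (the conflicting forks of the nothing-at-stake discussion), the security of the system is determined by the population state at genesis. The same argument transfers verbatim to the reward-for-work case, which shares the two-ESS structure; the universal-reward case is excluded because it admits only a mixed ESS and no pure honest stable state, so security cannot be guaranteed there regardless of genesis.

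The main obstacle I anticipate is the degenerate intermediate band $1/3 < x < 2/3$, where no quorum forms and every validator collects the nil-block payoff, so the selection gradient vanishes and the naive replicator dynamics predict a neutrally stable plateau rather than a clean separatrix. Pinning down exactly which genesis states are safe therefore requires a tie-breaking refinement of the dynamics on this band --- through stochastic drift, small perturbations of the nil-block assumption, or the asymmetry between $r+p$ and $r+b$ in the two outer regimes --- which is precisely what the simulations are invoked to resolve, including the $51\%/49\%$ run that drifts into the malicious ESS and thereby shows that a bare honest majority below the quorum threshold is not sufficient for security.
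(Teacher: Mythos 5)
Your proposal rests on the same core idea as the paper's proof: in the penalty case both the all-honest and all-malicious states are pure ESSs, neither can be invaded by a small mutant fraction, and therefore the long-run outcome is selected by the genesis population state. The paper stops there, asserting in three sentences that ``the strategy that dominates over time is the one that starts in the majority.'' You go further by formalizing the selection mechanism through replicator dynamics and by locating the basins of attraction via the quorum thresholds at $x \geq 2/3$ and $x \leq 1/3$, which buys you two things the paper's proof does not have. First, your version is consistent with the paper's own experiments, whereas the paper's prose is not: the $51\%/49\%$ simulation shows that a bare honest majority converges to the malicious ESS, so ``starts in the majority'' is the wrong threshold, and your placement of the separatrix strictly inside the band $1/3 < x < 2/3$ is the correct reading of that figure. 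Second, you explicitly flag the degenerate nil-block band where the fitness gap vanishes and the dynamics are only resolved by drift or by perturbing the nil-block assumption --- a genuine gap in the paper's argument that its proof silently skips over by never confronting what happens when neither quorum forms. So your route is a refinement rather than a departure: same bistability skeleton, but with the basin boundary made precise and an honest accounting of where the argument still needs the simulations to close it.
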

\begin{proof}
Let us consider the penalty case. With decent penalties, both honest
and malicious strategies are ESSs in this game because neither can be
invaded by the other. The strategy that dominates over time is the one
that starts in the majority. If honest validators start the network,
the honest strategy would be incumbent and the system will remain in
ESS. Similarly, if malicious validators start the network, the network
is malicious. The malicious strategy would be the incumbent strategy
and will remain in ESS. Hence, the population state of the genesis of
the network plays a crucial role for PoS blockchains.
\end{proof}
\section{Experimental Evaluation}
\label{simulations}

%
We performed experimental evaluation to confirm whether penalties are important and what percentage of malicious population can PoS system tolerate. 
Our methodology was to learn how the population state proportions evolve over the generations, the block rounds, using the GameBug
software~\cite{wyttenbach2011gamebug}. The defaults for the relation
between variables and the baseline value of the validator, used in our experimental
simulations, are given in Table~~\ref{tab:simulation-values}. If the
expense is $x$, the reward is taken as $10x$. The default Byzantine
benefit and penalty are choosen as $100x$ for a validator with a
\textit{baseline} balance of $1000x$.
\begin{table}[h]
\centering
\caption{Simulation variables and relative values}
\label{tab:simulation-values}
\begin{tabular}{|l|l|l|}
\hline
\textbf{Variable} & \textbf{Symbol} & \textbf{Value}  \\ \hline\hline
expense               & \textit{e}  & x               \\ \hline
reward                 & $r$& 10x             \\ \hline
Bynzantine benefit      & $b$          & 100x            \\ \hline
penalty                 & $p$ &100x            \\ \hline
\end{tabular}  
\vspace{-1em}
\end{table}

We ran experiments for the reward matrix of the universal reward case (Table~\ref{tab:payoff}) and the reward for work case (Table~\ref{tab:payoff2}) 
with the assumption of more than 90\% of the
validators are honest. We observe that the malicious strategy takes
over when everyone on the network is rewarded (see
Fig.~\ref{fig:case1-9010}). Blue and red signify the proportions of
validators with honest and malicious strategies, respectively. The
X-axis tracks the population proportions, and the Y-axis tracks the
generations, i.e., block rounds. In the no-penalty case, the honest
strategy is an ESS, unaffected by invading mutants (see
Fig.~\ref{fig:case2-9010}). However, we do not have honest ESS with malicious population above 10\%.
\begin{figure}
\centering
\includegraphics[width=0.5\textwidth]{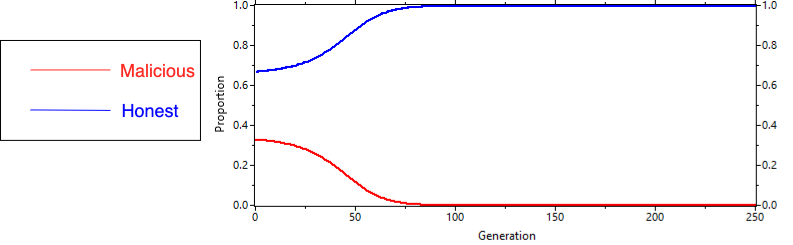}
  \caption{With penalties, honest strategy is ESS, can tolerate one-third malicious}\label{fig:case3-simulation}
  \vspace{-2em}
\end{figure}

We also ran simulations by relaxing the initial proportions at the quorum size of consensus. In other words, the initial proportions of honest and malicious validators are two-thirds and one-third, respectively. For the reward matrix in the reward for work
case (Table~\ref{tab:payoff2}), the proportion of honest drops
significantly over the generations, within 75 block rounds (see
Fig.~\ref{fig:case2-simulation}). Though the honest strategy is ESS,
it takes longer to reach ESS with a high initial proportion of malicious validators. Over generations, the validators would favor a malicious strategy because of high Byzantine benefit if the network starts with
a one-third proportion of malicious validators. For the reward matrix
in the penalty case (Table~\ref{tab:payoff3}), the honest strategy is
an ESS since validators do not shift from the incumbent strategy (see
Fig.~\ref{fig:case3-simulation}). High penalties play a crucial role
in this behaviour, we run a few experiments to verify the same. We
observe that the higher the penalties, the sooner the entire network
becomes honest. We increased the penalties to be 50\% and 100\% of the the
baseline to plot Fig.~\ref{fig:c3500p} and
Fig.~\ref{fig:c31000p}, respectively. We observe that with higher penalties, we reach the state of only honest population quickly.
%
For the penalty case, we tried a 51\% honest majority with default
values, we do not observe an honest ESS. The network cannot tolerate
51\% malicious behaviour (see Fig.~\ref{fig:c3_5149}).

%

\section{Conclusions}
\label{conclusion}
We formulated the block validation game and designed rewards to address free riders and nothing at stake challenges. Using evolutionary game theory, we proved the importance of penalties in maintaining the integrity of the ledger. In the future, we intend to extend this work to asynchronous environments and also analyze how the game adapts when the delegation of stake is allowed. 
\bibliographystyle{IEEEtran}
\bibliography{references}

\begin{thebibliography}{10}
\providecommand{\url}[1]{#1}
\csname url@samestyle\endcsname
\providecommand{\newblock}{\relax}
\providecommand{\bibinfo}[2]{#2}
\providecommand{\BIBentrySTDinterwordspacing}{\spaceskip=0pt\relax}
\providecommand{\BIBentryALTinterwordstretchfactor}{4}
\providecommand{\BIBentryALTinterwordspacing}{\spaceskip=\fontdimen2\font plus
\BIBentryALTinterwordstretchfactor\fontdimen3\font minus
  \fontdimen4\font\relax}
\providecommand{\BIBforeignlanguage}[2]{{%
\expandafter\ifx\csname l@#1\endcsname\relax
\typeout{** WARNING: IEEEtran.bst: No hyphenation pattern has been}%
\typeout{** loaded for the language `#1'. Using the pattern for}%
\typeout{** the default language instead.}%
\else
\language=\csname l@#1\endcsname
\fi
#2}}
\providecommand{\BIBdecl}{\relax}
\BIBdecl

\bibitem{nakamoto2019bitcoin}
S.~Nakamoto, ``Bitcoin: A peer-to-peer electronic cash system,'' Manubot, Tech.
  Rep., 2019.

\bibitem{pound_2021}
\BIBentryALTinterwordspacing
J.~Pound, ``Bitcoin hits \$1 trillion in market value as cryptocurrency surge
  continues,'' \emph{CNBC}, Feb 2021. [Online]. Available:
  \url{https://www.cnbc.com/2021/02/19/bitcoin-hits-1-trillion-in-market-value-as-cryptocurrency-surge-continues.html}
\BIBentrySTDinterwordspacing

\bibitem{wood2014ethereum}
G.~Wood \emph{et~al.}, ``Ethereum: A secure decentralised generalised
  transaction ledger,'' \emph{Ethereum project yellow paper}, vol. 151, no.
  2014, pp. 1--32, 2014.

\bibitem{adams2021uniswap}
H.~Adams, N.~Zinsmeister, M.~Salem, R.~Keefer, and D.~Robinson, ``Uniswap v3
  core,'' 2021.

\bibitem{mejeris2021blockchain}
J.~Mejeris, E.~Au, Y.~Cai, H.-A. Jacobsen, S.~Motepalli, R.~Sun, A.~Veneris,
  G.~Zhang, and S.~Zhang, ``Blockchain for v2x: A taxonomy ofdesign use cases
  and system requirements,'' in \emph{2021 3rd Conf. on Blockchain Research and
  Applicat. for Innovative Networks and Services (BRAINS)}, 2021.

\bibitem{mentzer2018impact}
K.~Mentzer and M.~Gough, ``The impact of cryptokitties on the ethereum
  blockchain,'' in \emph{2018 Annual Conf. (47th)}, 2018, p. 191.

\bibitem{kokoris2018omniledger}
E.~Kokoris-Kogias, P.~Jovanovic, L.~Gasser, N.~Gailly, E.~Syta, and B.~Ford,
  ``Omniledger: A secure, scale-out, decentralized ledger via sharding,'' in
  \emph{2018 IEEE Symposium on Security and Privacy (SP)}.\hskip 1em plus 0.5em
  minus 0.4em\relax IEEE, 2018, pp. 583--598.

\bibitem{stoll2019carbon}
C.~Stoll, L.~Klaa{\ss}en, and U.~Gallersd{\"o}rfer, ``The carbon footprint of
  bitcoin,'' \emph{Joule}, vol.~3, no.~7, pp. 1647--1661, 2019.

\bibitem{kiayias2017ouroboros}
A.~Kiayias, A.~Russell, B.~David, and R.~Oliynykov, ``Ouroboros: A provably
  secure proof-of-stake blockchain protocol,'' in \emph{Annual Int. Cryptology
  Conf.}\hskip 1em plus 0.5em minus 0.4em\relax Springer, 2017, pp. 357--388.

\bibitem{kwon2014tendermint}
J.~Kwon, ``Tendermint: Consensus without mining,'' \emph{Draft v. 0.6, fall},
  vol.~1, no.~11, 2014.

\bibitem{gilad2017algorand}
Y.~Gilad, R.~Hemo, S.~Micali, G.~Vlachos, and N.~Zeldovich, ``Algorand: Scaling
  byzantine agreements for cryptocurrencies,'' in \emph{Proceedings of the 26th
  Symposium on Operating Systems Principles}, 2017, pp. 51--68.

\bibitem{algorand-faq}
\BIBentryALTinterwordspacing
{Algorand FAQ}. Algorand Foundation. Accessed: 2021-03-18. [Online]. Available:
  \url{https://algorand.foundation/faq}
\BIBentrySTDinterwordspacing

\bibitem{fooladgar2020incentive}
M.~Fooladgar, M.~H. Manshaei, M.~Jadliwala, and M.~A. Rahman, ``On incentive
  compatible role-based reward distribution in algorand,'' in \emph{2020 50th
  Annual IEEE/IFIP Int. Conf. on Dependable Systems and Networks (DSN)}.\hskip
  1em plus 0.5em minus 0.4em\relax IEEE, 2020, pp. 452--463.

\bibitem{aval-token}
\BIBentryALTinterwordspacing
S.~Buttolph, A.~Moin, K.~Sekniqi, and E.~G. Sirer. (2020) {Avalanche Native
  Token (\$AVAX) Dynamics}. Ava Labs. Accessed: 2021-03-18. [Online].
  Available: \url{https://files.avalabs.org/papers/token.pdf}
\BIBentrySTDinterwordspacing

\bibitem{david2018ouroboros}
B.~David, P.~Ga{\v{z}}i, A.~Kiayias, and A.~Russell, ``Ouroboros praos: An
  adaptively-secure, semi-synchronous proof-of-stake blockchain,'' in
  \emph{Annual Int. Conf. on the Theory and Applications of Cryptographic
  Techniques}.\hskip 1em plus 0.5em minus 0.4em\relax Springer, 2018, pp.
  66--98.

\bibitem{cosmos}
\BIBentryALTinterwordspacing
C.~Unchained. (2018) {Cosmos Validator Economics — Bridging the Economic
  System of Old into the New Age of Blockchains}. Cosmos Blog. Accessed:
  2021-03-18. [Online]. Available:
  \url{"https://blog.cosmos.network/economics-of-proof-of-stake-bridging-the-economic-system-of-old-into-the-new-age-of-blockchains-3f17824e91db"}
\BIBentrySTDinterwordspacing

\bibitem{eth2stake}
\BIBentryALTinterwordspacing
J.~Beck. (2020) {Rewards and Penalties on Ethereum 2.0 [Phase 0]}. ConsenSys.
  Accessed: 2021-03-18. [Online]. Available:
  \url{https://consensys.net/blog/codefi/rewards-and-penalties-on-ethereum-20-phase-0/}
\BIBentrySTDinterwordspacing

\bibitem{eth2pos}
\BIBentryALTinterwordspacing
{Proof of Stake FAQs}. Ethereum Wiki. Accessed: 2021-03-18. [Online].
  Available: \url{https://eth.wiki/en/concepts/proof-of-stake-faqs}
\BIBentrySTDinterwordspacing

\bibitem{polkadot}
\BIBentryALTinterwordspacing
{Staking}. Polkadot wiki. Accessed: 2021-03-18. [Online]. Available:
  \url{https://wiki.polkadot.network/docs/en/learn-staking}
\BIBentrySTDinterwordspacing

\bibitem{chacko2021my}
J.~A. Chacko, R.~Mayer, and H.-A. Jacobsen, ``Why do my blockchain transactions
  fail? a study of hyperledger fabric,'' in \emph{Proceedings of the 2021 Int.
  Conf. on Management of Data}, 2021, pp. 221--234.

\bibitem{zhang2021prosecutor}
G.~Zhang and H.-A. Jacobsen, ``{Prosecutor: An Efficient BFT Consensus
  Algorithm with Behavior-aware Penalization Against Byzantine Attacks},'' in
  \emph{Middleware '21: 22st ACM/IFIP Int. Middleware Conf.}, 2021.

\bibitem{smith1973logic}
J.~M. Smith and G.~R. Price, ``The logic of animal conflict,'' \emph{Nature},
  vol. 246, no. 5427, pp. 15--18, 1973.

\bibitem{vincent2005evolutionary}
T.~L. Vincent and J.~S. Brown, \emph{Evolutionary game theory, natural
  selection, and Darwinian dynamics}.\hskip 1em plus 0.5em minus 0.4em\relax
  Cambridge University Press, 2005.

\bibitem{friedman1998economic}
D.~Friedman, ``On economic applications of evolutionary game theory,''
  \emph{Journal of evolutionary economics}, vol.~8, no.~1, pp. 15--43, 1998.

\bibitem{witt2016specific}
U.~Witt, ``What is specific about evolutionary economics?'' in \emph{Rethinking
  Economic Evolution}.\hskip 1em plus 0.5em minus 0.4em\relax Edward Elgar
  Publishing, 2016.

\bibitem{mailath1998people}
G.~J. Mailath, ``Do people play nash equilibrium? lessons from evolutionary
  game theory,'' \emph{Journal of Economic Literature}, vol.~36, no.~3, pp.
  1347--1374, 1998.

\bibitem{buss2015evolutionary}
D.~Buss, \emph{Evolutionary psychology: The new science of the mind}.\hskip 1em
  plus 0.5em minus 0.4em\relax Psychology Press, 2015.

\bibitem{laland2011sense}
K.~N. Laland, G.~Brown, and G.~R. Brown, \emph{Sense and nonsense: Evolutionary
  perspectives on human behaviour}.\hskip 1em plus 0.5em minus 0.4em\relax
  Oxford University Press, 2011.

\bibitem{tooby2005conceptual}
J.~Tooby and L.~Cosmides, ``Conceptual foundations of evolutionary
  psychology.'' 2005.

\bibitem{easley2010networks}
D.~Easley, J.~Kleinberg \emph{et~al.}, \emph{Networks, crowds, and
  markets}.\hskip 1em plus 0.5em minus 0.4em\relax Cambridge university press
  Cambridge, 2010, vol.~8.

\bibitem{smith1976evolution}
J.~M. Smith \emph{et~al.}, ``Evolution and the theory of games,''
  \emph{American scientist}, vol.~64, no.~1, pp. 41--45, 1976.

\bibitem{cowden2012game}
C.~Cowden, ``Game theory, evolutionary stable strategies and the evolution of
  biological interactions,'' \emph{Nature Education Knowledge}, vol.~3, no.~6,
  2012.

\bibitem{liu2019survey}
Z.~Liu, N.~C. Luong, W.~Wang, D.~Niyato, P.~Wang, Y.-C. Liang, and D.~I. Kim,
  ``A survey on applications of game theory in blockchain,'' \emph{arXiv
  preprint arXiv:1902.10865}, 2019.

\bibitem{zhang2020analysing}
S.~Zhang, K.~Zhang, and B.~Kemme, ``Analysing the benefit of selfish mining
  with multiple players,'' in \emph{2020 IEEE Int. Conf. on Blockchain
  (Blockchain)}.\hskip 1em plus 0.5em minus 0.4em\relax IEEE, 2020, pp. 36--44.

\bibitem{kroll2013economics}
J.~A. Kroll, I.~C. Davey, and E.~W. Felten, ``The economics of bitcoin mining,
  or bitcoin in the presence of adversaries,'' in \emph{Proceedings of WEIS},
  vol. 2013, 2013, p.~11.

\bibitem{liu2018evolutionary}
X.~Liu, W.~Wang, D.~Niyato, N.~Zhao, and P.~Wang, ``Evolutionary game for
  mining pool selection in blockchain networks,'' \emph{IEEE Wireless
  Communications Letters}, vol.~7, no.~5, pp. 760--763, 2018.

\bibitem{ni2019evolutionary}
Z.~Ni, W.~Wang, D.~I. Kim, P.~Wang, and D.~Niyato, ``Evolutionary game for
  consensus provision in permissionless blockchain networks with shards,'' in
  \emph{ICC 2019-2019 IEEE Int. Conf. on Communications (ICC)}.\hskip 1em plus
  0.5em minus 0.4em\relax IEEE, 2019, pp. 1--6.

\bibitem{kim2019mining}
S.~Kim and S.-G. Hahn, ``Mining pool manipulation in blockchain network over
  evolutionary block withholding attack,'' \emph{IEEE Access}, vol.~7, pp.
  144\,230--144\,244, 2019.

\bibitem{iyer2018crypto}
K.~Iyer and C.~Dannen, ``Crypto-economics and game theory,'' in \emph{Building
  Games with Ethereum Smart Contracts}.\hskip 1em plus 0.5em minus 0.4em\relax
  Springer, 2018, pp. 129--141.

\bibitem{chang2020incentive}
Z.~Chang, W.~Guo, X.~Guo, Z.~Zhou, and T.~Ristaniemi, ``Incentive mechanism for
  edge-computing-based blockchain,'' \emph{IEEE Transactions on Industrial
  Informatics}, vol.~16, no.~11, pp. 7105--7114, 2020.

\bibitem{chiu2019incentive}
J.~Chiu and T.~Koeppl, ``Incentive compatibility on the blockchain,'' in
  \emph{Social Design}.\hskip 1em plus 0.5em minus 0.4em\relax Springer, 2019,
  pp. 323--335.

\bibitem{xuan2020incentive}
S.~Xuan, L.~Zheng, I.~Chung, W.~Wang, D.~Man, X.~Du, W.~Yang, and M.~Guizani,
  ``An incentive mechanism for data sharing based on blockchain with smart
  contracts,'' \emph{Computers \& Electrical Engineering}, vol.~83, p. 106587,
  2020.

\bibitem{li2017novel}
J.~Li, G.~Liang, and T.~Liu, ``A novel multi-link integrated factor algorithm
  considering node trust degree for blockchain-based communication.''
  \emph{KSII Transactions on Internet \& Inform. Systems}, vol.~11, no.~8,
  2017.

\bibitem{sompolinsky2015secure}
Y.~Sompolinsky and A.~Zohar, ``Secure high-rate transaction processing in
  bitcoin,'' in \emph{Int. Conf. on Financial Cryptography and Data
  Security}.\hskip 1em plus 0.5em minus 0.4em\relax Springer, 2015, pp.
  507--527.

\bibitem{tversky1979prospect}
A.~Tversky and D.~Kahneman, ``Prospect theory: An analysis of decision under
  risk,'' \emph{Econometrica}, vol.~47, no.~2, pp. 263--291, 1979.

\bibitem{tversky1991loss}
------, ``Loss aversion in riskless choice: A reference-dependent model,''
  \emph{The quarterly journal of economics}, vol. 106, no.~4, pp. 1039--1061,
  1991.

\bibitem{pfattheicher2016misperceiving}
S.~Pfattheicher and S.~Schindler, ``Misperceiving bullshit as profound is
  associated with favorable views of cruz, rubio, trump and conservatism,''
  \emph{PloS one}, vol.~11, no.~4, p. e0153419, 2016.

\bibitem{gachter2009experimental}
S.~G{\"a}chter, H.~Orzen, E.~Renner, and C.~Starmer, ``Are experimental
  economists prone to framing effects? a natural field experiment,''
  \emph{Journal of Economic Behavior \& Organization}, vol.~70, no.~3, pp.
  443--446, 2009.

\bibitem{smith1982evolution}
J.~M. Smith, \emph{Evolution and the Theory of Games}.\hskip 1em plus 0.5em
  minus 0.4em\relax Cambridge university press, 1982.

\bibitem{wyttenbach2011gamebug}
\BIBentryALTinterwordspacing
R.~Wyttenbach, ``Gamebug software,'' 2011. [Online]. Available:
  \url{http://hoylab.cornell.edu/download.html}
\BIBentrySTDinterwordspacing

\end{thebibliography}

\end{document}